\documentclass[11pt,a4paper]{article}

\usepackage{epsf,epsfig,amsfonts,amsgen,amsmath,amstext,amsbsy,amsopn,amsthm,cases,listings,color}
\usepackage{ebezier,eepic}
\usepackage{color}
\usepackage{multirow}
\usepackage{epstopdf}
\usepackage{graphicx}   
\usepackage{pgf,tikz}
\usepackage{mathrsfs}
\usepackage[marginal]{footmisc}
\usepackage{enumerate}
\usepackage{enumitem}
\usepackage[titletoc]{appendix}
\usepackage{booktabs}
\usepackage{url}
\usepackage{mathtools}
\usepackage{pdflscape} 
\usepackage{pgfplots}
\usepackage{authblk}
\usepackage{amssymb}
\usepackage{wasysym}
\usepackage{empheq}
\usepackage{dsfont}
\usepackage{tikz}
\usepackage{longtable}
\usepackage{float}
\usepackage{subfigure}
\pgfplotsset{compat=1.18}
\usepackage{mathrsfs}
\usepackage{wasysym} 
\usetikzlibrary{arrows}
\usepackage{aligned-overset}
\usepackage{bm}
\usepackage{bbm}
\usepackage[T1]{fontenc}
\usepackage{amsmath}
\usepackage[ruled,linesnumbered,vlined]{algorithm2e}
\usepackage{makecell}

\allowdisplaybreaks[1]

\theoremstyle{definition}
\newtheorem{definition}{Definition} [section]
\newtheorem{theorem}[definition]{Theorem}
\newtheorem{lemma}[definition]{Lemma}

\begin{document}
\title{\bf\Large  Near-Optimal Algorithms for Maximal Clique Enumeration in Structurally Sparse Graphs}
\author{
Jianfeng Hou\thanks{
Emails: jfhou@fzu.edu.cn},
Hongbin Zhao\thanks{
Emails: hbzhao2024@163.com },
\\
\small Center for Discrete Mathematics and Theoretical Computer Science,\\
\small Fuzhou University, Fuzhou, Fujian, China
}
\date{}
\maketitle
\begin{abstract}
We study the exact enumeration of maximal cliques in graph classes defined by excluded  clique minors and excluded  clique immersions. 
For $n$-vertex $K_t$-minor-free graphs, we give an algorithm that lists all maximal cliques in $n \cdot 4^{2t/5+o(t)}$ time, significantly improving the previous $n \cdot 2^{O(t \log \log t)}$ bound of Eppstein, L\"{o}ffler, and Strash. 
For $n$-vertex $K_t$-immersion-free graphs, we establish the first exact enumeration algorithm parameterized by immersion number, achieving a running time of $n \cdot 3^{t/3+o(t)}$. 
While both algorithms employ a common degeneracy-based root-assignment scheme, their analyses require distinct structural mechanisms. 
Crucially, rather than applying generic sparsity bounds, our algorithms deeply integrate the specific structural obstructions—local density thresholds for minors and minimum-degree branchings for immersions—directly into the enumeration logic. 
We also prove matching output-size lower bounds, up to sub-exponential factors in $t$, using specialized constructions.   Consequently, the exponential bases $4^{2/5}$ and $3^{1/3}$ are asymptotically optimal.

\medskip
\end{abstract}
\noindent\textbf{Keywords:} Maximal clique; parameterized complexity; $K_t$-minor-free; $K_t$-immersion-free.

\section{Introduction}\label{sec:int}

Let $G$ be a graph.  A \emph{clique} of $G$ is a set of pairwise adjacent vertices, and it is called a \emph{maximal clique} if it is not properly contained in another clique. Maximal clique enumeration is a canonical output-sensitive problem in graph algorithms, with applications spanning community detection \cite{WCLG17}, web network analysis \cite{STKA07,WHYW09}, bioinformatics \cite{M22,ZPKF08}, telecommunications \cite{BB06}, and location privacy protection \cite{PM13}. The proliferation of massive real-world graphs has made the design of efficient enumeration algorithms an increasingly critical challenge. 
From the viewpoint of worst-case complexity, however, the problem is necessarily exponential: the maximum clique problem is NP-hard \cite{K72}, and an $n$-vertex graph may contain $3^{n/3}$ maximal cliques by the classical result of Miller--Muller \cite{MM60} and independently Moon--Moser \cite{MM65}. 
This bound is algorithmically tight in general graphs, as Tomita, Tanaka, and Takahashi \cite{TTT04} gave an optimized Bron--Kerbosch algorithm \cite{BK73} with worst-case running time $O(3^{n/3})$.

The central algorithmic question is therefore not whether the exponential dependence can be avoided in full generality, but whether it can be confined to a structural parameter. This perspective is particularly successful for sparse or locally sparse graph classes. 
For example, maximal clique enumeration admits fixed-parameter algorithms in terms of degeneracy \cite{ELS10,ELS13} and in the model of $c$-closed graphs \cite{FRSWW18,FRSWW20}; related centralized \cite{BK73,CK08,ELS13,K01,TTT04,JXHZZ22} and distributed \cite{CKFYZ11,CZKC12,CDM16,XCF16} algorithms have also been developed for large sparse networks. 
In this paper, we focus on two fundamental containment-based notions of sparsity: excluding a complete graph as a minor and as an immersion.

We recall the standard definition of graph minors. A graph $H$ is a \emph{minor} of a graph $G$ if $H$ can be obtained from $G$ by a sequence of vertex deletions, edge deletions, and edge contractions. For an integer $t\ge 1$, let $K_t$ denote the complete graph on $t$ vertices. A graph $G$ is \emph{$K_t$-minor-free} if $K_t$ is not a minor of $G$. A fundamental result of Thomason \cite{T01} shows that every $K_t$-minor-free graph has bounded degeneracy. Using the degeneracy-based algorithmic framework,  Eppstein, L\"{o}ffler, and Strash \cite{ELS13}  established the following parameterized enumeration bound.
\begin{theorem}[Eppstein--L\"{o}ffler--Strash \cite{ELS13}] \label{Thm:ELS}
    For positive integers $n, t$ and  $n$-vertex $K_t$-minor free graph $G$, there exists an algorithm that enumerates all maximal cliques of $G$ in $n \cdot 2^{O(t \log \log t)}$ time.
\end{theorem}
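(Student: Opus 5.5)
We prove Theorem~\ref{Thm:ELS} in two steps: first bound the degeneracy of $G$ in terms of $t$, then run a degeneracy-ordered Bron--Kerbosch enumeration whose cost is exponential only in the degeneracy.

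\textbf{Step 1: degeneracy bound.} By Thomason's theorem \cite{T01}, every graph with average degree at least $c\,t\sqrt{\log t}$, for a suitable absolute constant $c$, contains a $K_t$ minor. Every subgraph of a $K_t$-minor-free graph is again $K_t$-minor-free, so every subgraph $H$ of $G$ has average degree below $c\,t\sqrt{\log t}$. Hence $H$ has a vertex of degree at most $c\,t\sqrt{\log t}$. Thus $G$ is $d$-degenerate with
\[ d=O(t\sqrt{\log t}). \]
A degeneracy ordering $v_1,\dots,v_n$ can be computed in $O(n+m)$ time by repeatedly deleting a minimum-degree vertex. Here $m\le dn$, so this step is linear in $n$ for fixed $t$.

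\textbf{Step 2: enumeration.} Each maximal clique $C$ is charged to its earliest vertex $v_i$ in the ordering. Then $C\setminus\{v_i\}$ lies in the set $N^+(v_i)$ of later neighbours of $v_i$, which has size at most $d$. Moreover, $C$ is maximal in $G$ exactly when:
\begin{itemize}
\item $C\setminus\{v_i\}$ is a maximal clique in $G[N^+(v_i)]$, and
\item no earlier neighbour of $v_i$ is adjacent to all of $C$.
\end{itemize}
For each $i$ we call the pivoting Bron--Kerbosch procedure of Tomita, Tanaka, and Takahashi \cite{TTT04} with $R=\{v_i\}$, $P=N^+(v_i)$, and $X=N^-(v_i)$, the set of earlier neighbours. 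Following \cite{ELS13}, the sets $P$ and $X$ and the pivot computations are maintained using adjacency information restricted to the candidate sets. The recursion tree has $O(3^{d/3})$ nodes, since $|P|\le d$. Each node costs $\mathrm{poly}(d)$, which uses that $X$ may be large but only its vertices' neighbourhoods inside $P$ matter. So the total time is
\[ O\bigl(n\cdot \mathrm{poly}(d)\cdot 3^{d/3}\bigr). \]

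\textbf{Step 3: conversion.} Substituting $d=O(t\sqrt{\log t})$ gives $n\cdot 2^{O(t\sqrt{\log t})}$, which is weaker than the claimed $2^{O(t\log\log t)}$. So the naive per-vertex $3^{d/3}$ bound is insufficient, and this is the main obstacle. To close the gap I would use that $G[N^+(v_i)]$ is itself $K_{t-1}$-minor-free: $v_i$ is adjacent to all of it, so a $K_{t-1}$ minor there would extend to a $K_t$ minor in $G$. The task is then to bound the number of maximal cliques of a $d$-vertex $K_{t-1}$-minor-free graph by $2^{O(t\log\log t)}$, and to make the enumeration output-sensitive so the running time matches this count.

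For the counting bound I would argue recursively. Branch on a minimum-degree vertex of the current $d'$-vertex graph: it has degree at most $c\,t\sqrt{\log t}$, and either it lies in the clique, leaving a $K_{t-1}$-minor-free neighbourhood, or it is deleted. If this recursion is organized by clique size, it yields a count of the form $\binom{O(t\sqrt{\log t})}{\le t}$. Since $\binom{t\sqrt{\log t}}{t}\le (e\sqrt{\log t})^{t}=2^{O(t\log\log t)}$, this matches the theorem. I would then use an output-polynomial-delay listing method (e.g.\ Tsukiyama-type reverse search) inside each $N^+(v_i)$ to turn the count into running time. The cost is $n\cdot\mathrm{poly}(t)\cdot 2^{O(t\log\log t)}=n\cdot 2^{O(t\log\log t)}$.
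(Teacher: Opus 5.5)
The paper gives no proof of this statement. Theorem~\ref{Thm:ELS} is quoted from \cite{ELS13} only as the baseline that Theorem~\ref{Thm:Alg} improves, so there is no argument to match step by step. Your reconstruction is correct. Its skeleton is the one the paper itself uses in Section~\ref{sec:minor_alg}:
\begin{itemize}
\item Thomason's bound $d=O(t\sqrt{\log t})$;
\item charging each maximal clique to its first vertex in a degeneracy order;
\item enumerating inside $N^+(v_i)$;
\item filtering against $N^-(v_i)$, with total cost controlled by $\sum_i|N^-(v_i)|=|E(G)|\le dn$.
\end{itemize}
You also correctly identify that the plain $3^{d/3}$ bound only gives $2^{O(t\sqrt{\log t})}$.

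Three points to tighten.
\begin{itemize}
\item \textbf{The counting step.} It needs neither a recursion nor $K_{t-1}$-minor-freeness of $G[N^+(v_i)]$. A clique of size $t-1$ in $N^+(v_i)$, together with $v_i$, would be a $K_t$ subgraph and hence a $K_t$ minor. So every clique of $G[N^+(v_i)]$ is a subset of size at most $t-2$ of a $d$-set, and there are at most $\sum_{k\le t-2}\binom{d}{k}\le (ed/t)^{t}=2^{O(t\log\log t)}$ of them, maximal or not. Your \emph{recursion organized by clique size} is not an argument as written; replace it with this one-line count.
\item \textbf{No need to switch to Tsukiyama.} The nodes of the pivoting Bron--Kerbosch tree rooted at $v_i$ carry pairwise distinct cliques $R\ni v_i$ with $R\setminus\{v_i\}\subseteq N^+(v_i)$. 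Hence that same tree has at most $\min\bigl\{O(3^{d/3}),\sum_{k\le t-2}\binom{d}{k}\bigr\}$ nodes, and the degeneracy-based algorithm of Step~2 already runs in $n\cdot 2^{O(t\log\log t)}$ under a sharper count.
\item \textbf{Per-node cost.} A node's cost is not $\mathrm{poly}(d)$ independently of $X$, since $|X|$ can be as large as $|N^-(v_i)|+d$. What holds is that each node of the tree rooted at $v_i$ costs $\mathrm{poly}(d)\,(1+|N^-(v_i)|)$. Summed over roots this gives $(n+|E(G)|)\cdot\mathrm{poly}(d)\cdot 2^{O(t\log\log t)}=n\cdot 2^{O(t\log\log t)}$. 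This is the same accounting the paper does for the static part of $X$, and you need it explicitly for the filtering step in your Tsukiyama variant too.
\end{itemize}

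On what each route buys: your bound uses only $\omega(G)<t$ inside a candidate set of size $O(t\sqrt{\log t})$, which is exactly where the $\log\log t$ in the exponent comes from. The paper's Theorem~\ref{Thm:Alg} removes it in two ways. Minimum-degree peeling shrinks candidate sets to size $O(t)$ along all but $2^{o(t)}$ branches. On dense leaves, Lemma~\ref{Lem:FW} then gives the stronger joint constraint $|P|+\omega(G[P])<2t$.
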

Theorem \ref{Thm:ELS} provides a fixed-parameter tractable (FPT) algorithm, but it does not identify the correct exponential dependence on $t$.  Our first result determines the asymptotically optimal exponential term for the minor-free case.
\begin{theorem}\label{Thm:Alg}
    For every $n$-vertex $K_t$-minor free graph, all its maximal cliques can be listed in $n \cdot 4^{2t/5+o(t)}$ time. Furthermore, this bound is asymptotically sharp up to the $o(t)$ term in the exponent when $n\ge 8t/5$.
\end{theorem}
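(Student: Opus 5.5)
The plan has two parts: an algorithm with the claimed running time, and a lower-bound construction showing the exponent is sharp.

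\textbf{Upper bound.} We use the degeneracy root-assignment framework of Eppstein--L\"{o}ffler--Strash. By Thomason's theorem, every $K_t$-minor-free graph is $d$-degenerate with $d=O(t\sqrt{\log t})$. Fix a degeneracy ordering. Every maximal clique $C$ has a unique first vertex $v$, and $C\setminus\{v\}$ lies in the later neighbourhood $N^+(v)$, which has size at most $d$. Hence it suffices, for each $v$, to enumerate the maximal cliques of $G[N^+(v)\cup\{v\}]$ that are maximal in $G$. Maximality in $G$ is tested with the usual $P/X$ sets of Bron--Kerbosch with pivoting, so each vertex costs time polynomial in $d$ times the number of maximal cliques of the small graph $H_v=G[N^+(v)]$.

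The whole difficulty is the count. We need to show that every $K_{t-1}$-minor-free graph $H$ on $m\le d$ vertices has at most $4^{2t/5+o(t)}$ maximal cliques, and that the pivoting recursion realises this bound. Note that $H_v$ inherits minor-freeness, and in fact $v$ is adjacent to all of it, so $H_v$ is $K_{t-1}$-minor-free.

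\textbf{Counting maximal cliques.} We bound the number of maximal cliques by induction and branching, using local density. Since $H$ is $K_{t-1}$-minor-free and small, it has few edges relative to a complete graph, so many vertices have large non-degree. We branch on a vertex $u$ of minimum degree in $H$. Cliques containing $u$ live in $N(u)$, and cliques avoiding $u$ must contain some non-neighbour of $u$.

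The target is a recurrence in which the measure is $t$ rather than $m$. Contracting structure should reduce the excluded minor size: if $u$ has a non-neighbour $w$ with common neighbourhood $S$, the relevant cliques lie in $S$, and $G[S]$ together with the connected piece containing $u,w$ yields a minor. So $G[S]$ excludes $K_{t-2}$ while losing at least two vertices, or similar. Concretely, I would try to prove $f(t)\le \max\{\,2f(t-1),\;\dots\,\}$ for the maximum number of maximal cliques over $K_t$-minor-free graphs, in a form whose worst case is a 5-vertex gadget giving $16$ cliques per $2$ units of $t$. Balancing the branching cases is what should produce the base $4^{2/5}$ per unit of $t$, that is, $16$ per $5$ units.

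\textbf{Main obstacle.} I expect the main obstacle to be finding the right potential: a quantity like $t$ minus a correction for $m-t$ (graphs with $m\gg t$ are sparse, so Moon--Moser blocks cannot all be used), together with a case analysis showing every branch decreases it enough. I would first try the potential $\phi=t$ and handle dense graphs with $m\le ct$ directly via Moon--Moser-type counting weighted by the Hadwiger number. The $o(t)$ slack absorbs the $\mathrm{poly}(d)$ overhead and the finitely many low cases.

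\textbf{Lower bound.} For $n\ge 8t/5$, take the disjoint union of about $t/5$ copies of a gadget: a copy of $K_8$ minus a perfect matching, which has $2^4=16$ maximal cliques and Hadwiger number $5$. A join-like combination (a complete multipartite-type join) is needed so that the clique counts multiply, $16^{t/5}=4^{2t/5}$, while the Hadwiger numbers add up to about $t$. Each gadget uses $8$ vertices per $5$ units of $t$, which matches $n\ge 8t/5$; remaining vertices are pad isolated. We then verify the Hadwiger number of the join via the additivity of Hadwiger number under joins for such cocktail-party graphs, which is known to be $\lfloor 3k/2\rfloor$-type. Finally, multiplying by $n/(8t/5)$ disjoint copies gives the factor $n$.
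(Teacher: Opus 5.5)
Your proposal has a genuine gap in each half: the upper bound is reduced to a counting statement that is never proved, and the lower-bound gadget has the wrong Hadwiger number.

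\textbf{Upper bound.} Your reduction is fine as far as it goes, provided you run an output-polynomial routine such as Tsukiyama--Ide--Ariyoshi--Shirakawa on $H_v$. Pivoting Bron--Kerbosch only has the worst-case $O(3^{m/3})$ guarantee, not an output-sensitive one. The problem is what the reduction leaves: that a $K_t$-minor-free graph on $m\le d=O(t\sqrt{\log t})$ vertices has at most $4^{2t/5+o(t)}$ maximal cliques. That is essentially the entire content of the upper bound, and your plan does not prove it.

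The sketched step also fails. Maximal cliques through $u$ live in $N(u)$ and those through $w$ in $N(w)$, not in $S=N(u)\cap N(w)$. Since $u,w$ are non-adjacent they cannot be two branch sets, so the drop to $K_{t-2}$-minor-freeness is unjustified; adding $u$ alone only shows $G[S]$ is $K_{t-1}$-minor-free. Moreover, the term $2f(t-1)$ by itself would only give $2^t>4^{2t/5}$.

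What is missing is the mechanism linking the excluded minor to the quantity that the base $4^{2/5}$ actually measures, namely $v+\omega$. The paper supplies it in three steps:
\begin{enumerate}
\item[(i)] The Fox--Wei theorem: $h(G)=\lfloor (v(G)+\omega(G))/2\rfloor$ whenever $v(G)\ge\omega(G)+2\Delta(\overline{G})^2+2$. Hence a $K_t$-minor-free candidate graph $G[P]$ with $|P|>1.5t$ and minimum degree at least $|P|-0.15\sqrt{|P|}$ satisfies $|P|+\omega<2t$.
\item[(ii)] Zykov symmetrization, reducing to complete multipartite graphs. There $\mathrm{mc}=\prod a_i\le 4^{\sum(a_i+1)/5}=4^{(v+\omega)/5}$, since $\max_a a^{1/(a+1)}=4^{1/5}$.
\item[(iii)] A peeling recursion on a minimum-degree vertex of $G[P]$. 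Outside the dense regime its degree is below $|P|-0.15\sqrt{|P|}$, so each inclusion shrinks $|P|$ by more than $0.15\sqrt{|P|}$. Every path then has $O(t^{1/2}\log^{1/4}t)$ inclusions, there are only $2^{o(t)}$ leaves, and leaves with $|P|\le 1.5t$ cost $3^{t/2}<4^{2t/5}$.
\end{enumerate}

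\textbf{Lower bound.} $K_8$ minus a perfect matching is the cocktail-party graph $K_{2,2,2,2}$, whose Hadwiger number is $6$, not $5$. With parts $\{a_i,b_i\}$, the branch sets $\{a_1\},\dots,\{a_4\},\{b_1,b_2\},\{b_3,b_4\}$ give a $K_6$ minor, in line with the $\lfloor 3k/2\rfloor$ formula you cite at $k=4$. A join of $j$ copies therefore has $16^j$ maximal cliques at Hadwiger number $6j$. Staying $K_t$-minor-free then yields only $16^{t/6}=2^{2t/3}$, exponentially below $4^{2t/5}=2^{4t/5}$.

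In general, a part of size $a$ multiplies the clique count by $a$ at a cost of about $(a+1)/2$ in Hadwiger number. Since $a^{2/(a+1)}$ is maximized at $a=4$, not $a=2$, parts of size $2$ are the wrong choice. The paper uses $K_{4,\dots,4}$ with $k=\lfloor(2t-1)/5\rfloor$ parts. Its complement is a disjoint union of $K_4$'s, so $\Delta=3$, and Fox--Wei gives $h=\lfloor 5k/2\rfloor\le t-1$ once $3k\ge 20$. Meanwhile $\mathrm{mc}=4^k$, and each component has $4k<8t/5$ vertices.

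Finally, an upper bound on the Hadwiger number of a join cannot come from ``additivity''. Hadwiger number is only superadditive under joins, e.g.\ $h(\overline{K_2}\vee\overline{K_2})=h(C_4)=3>1+1$. A result like Fox--Wei is needed there as well.
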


We also consider the analogous problem for forbidden immersions. An \emph{immersion} of a graph $H$ into $G$ maps the vertices of $H$ injectively to branch vertices of $G$, while each edge of $H$ is represented by an edge-disjoint path between the corresponding pair of branch vertices. In this work, we adopt the standard strong version of immersion, where no branch vertex appears as an internal vertex of one of these paths. Graphs that exclude a fixed immersion admit rich structural properties, which yield significant implications for routing problems and network classification tasks \cite{AL03,RS10,W15}. Immersions impose a different type of structural restriction from minors: minor containment is driven by contractions, whereas immersion containment is driven by edge-disjoint routing. This distinction leads to a different tight exponential base.
\begin{theorem}\label{Thm:immersion}
    For every $n$-vertex graph with no $K_t$-immersion, all its maximal cliques can be listed in $n \cdot 3^{t/3+o(t)}$ time. Furthermore, this bound is asymptotically sharp up to the $o(t)$ term in the exponent when $n\ge t$.
\end{theorem}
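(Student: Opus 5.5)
We prove Theorem~\ref{Thm:immersion} in two parts: an upper bound via a degeneracy-ordered root assignment, and a matching lower bound from an explicit construction.

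\textbf{Algorithm and structural input.} We use the standard structural fact that a graph with no $K_t$-immersion has bounded minimum degree. Concretely, every graph of minimum degree at least $c t$ contains a $K_t$-immersion. Known results give $c=1+o(1)$ for dense hosts; in general $c$ is a constant (around $7$--$11$) by the theorems of DeVos et al.\ and Dvo\v{r}\'ak--Yepremyan. Hence $G$ is $O(t)$-degenerate. We compute a degeneracy ordering $v_1,\dots,v_n$ in linear time. Each maximal clique $C$ is assigned to its earliest vertex $v_i$, so $C\setminus\{v_i\}$ is a maximal clique of $G[N^+(v_i)]$ relative to the earlier neighbourhood constraint. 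This is exactly the Eppstein--L\"offler--Strash scheme, run with Tomita-style pivoting (Bron--Kerbosch) inside the forward neighbourhood. The total cost is $n$ times the maximum, over $i$, of the number of maximal cliques of $H_i=G[N^+(v_i)]$ (up to polynomial factors). We therefore need to show that every graph $H$ arising this way has at most $3^{t/3+o(t)}$ maximal cliques, even though $|V(H)|$ may be as large as $ct$.

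\textbf{Key counting step (main obstacle).} The naive Moon--Moser bound gives $3^{ct/3}$, which is too weak. We must use the fact that $v_i$ together with $H_i$ has no $K_t$-immersion. We branch Bron--Kerbosch-style on a vertex $u$ of \emph{minimum degree} in the current candidate graph $H$. Every maximal clique either contains $u$, which leaves a subproblem on $N_H(u)$ of size $\delta(H)$, or contains a non-neighbour of $u$. This gives the recurrence $T(m)\le \sum_{w\in \{u\}\cup \overline{N}(u)} T(\deg w)$.

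The crucial claim is this: if $H$ on $m$ vertices has minimum degree $\delta\ge m-O(1)\cdot\ldots$, then the immersion constraint forces $m\le t+o(t)$. More precisely, a near-complete graph (each vertex missing few neighbours) plus the apex $v_i$ immerses $K_{m+1-o(m)}$, since missing edges can be routed through disjoint two-paths via vertices of high degree. So when $H$ is dense we have $m\le t+o(t)$, and Moon--Moser on $m$ vertices yields $3^{t/3+o(t)}$. When $H$ is sparse, some vertex has many non-neighbours. Each branching step then shrinks the instance by a constant fraction of its co-degree, and a potential-function argument on $m$ (weighted so that each vertex counts only once it participates in the dense core) shows that these steps cost only $2^{o(t)}$ extra, or stay within $3^{(\cdot)/3}$ of the effective size. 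I expect this interpolation to be the hardest part. I would try an induction with measure $\mu(H)=$ the largest $k$ such that $H$ plus an apex contains a $K_k$-immersion, and prove $\#\mathrm{MC}(H)\le 3^{\mu/3}\cdot 2^{o(\mu)}$. The step needing care is showing that $\mu$ drops appropriately along minimum-degree branches.

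\textbf{Lower bound.} For $n\ge t$, take the complete multipartite graph $K_{3,3,\dots,3}$ with $\lfloor (t-1)/3\rfloor$ parts, giving about $t$ vertices, padded with isolated vertices or pendant structure to reach $n$ vertices. It has $3^{t/3-O(1)}$ maximal cliques. We must check that it has no $K_t$-immersion: it has fewer than $t$ vertices of degree at least $t-1$, since every vertex has degree $t-4$ or so, so no $t$ branch vertices each of degree $\ge t-1$ exist. To get $n\cdot 3^{t/3-o(t)}$ output we take $\lfloor n/t\rfloor$ disjoint copies. Immersions of $K_t$ are $2$-edge-connected, so each would lie inside a single copy, which is excluded by the degree argument. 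Thus the output size, and hence the running time, is $n\cdot 3^{t/3-o(t)}$, matching the upper bound.
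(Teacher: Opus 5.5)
\textbf{There is a genuine gap in the upper bound.} Your lower bound is fine and matches the paper's construction: disjoint copies of $K_{3,\dots,3}$ with $\lfloor (t-1)/3\rfloor$ parts. Your degree argument works, and so does the simpler observation that each component has fewer than $t$ vertices.

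The upper bound, however, breaks exactly where you flag the main obstacle: you never show that the work per root is $3^{t/3+o(t)}$.
\begin{itemize}
\item Your ``crucial claim'' covers only near-complete candidate graphs, and even there the routing is not carried out. Under the paper's strong-immersion convention the connecting two-paths may not pass through branch vertices. So only the $o(m)$ non-branch vertices can act as hubs, and you would have to count how many missing edges they can absorb.
\item The complementary ``sparse'' regime is left to an unspecified potential function, or to a $\mu$-induction whose key step you do not prove. The recurrence $T(m)\le\sum_w T(\deg w)$ is never solved. Nothing in the proposal rules out long chains of branchings that each remove only a few candidates.
\item The reduction ``cost $\approx n\cdot\max_i\mathrm{mc}(H_i)$ up to polynomial factors'' is not what Tomita-style pivoting gives. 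Its guarantee is the worst-case $3^{m/3}$ bound, not an output-sensitive one. You would need a polynomial-delay enumerator on $H_i$ followed by filtering against the earlier neighbourhood, or a direct bound on the search tree.
\end{itemize}

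\textbf{The missing ingredient} is a quantitative degree criterion for immersions, namely Lemma~\ref{Lem:FW20}: if $t$ vertices of a $p$-vertex graph all have degree greater than $(p+t)/2-2$, they are the branch vertices of a strong $K_t$-immersion. The paper uses it as follows.
\begin{itemize}
\item Applied to $G[P]$ with $|P|>t$, it forces the minimum-degree vertex $v^*$ to satisfy $\deg_{G[P]}(v^*)\le(|P|+t)/2-2$.
\item Hence the inclusion branch gives $|P\cap N(v^*)|-t\le\frac12(|P|-t)-2$, so the excess $|P|-t$ halves at every inclusion.
\item Starting from $|N^+(v_i)|\le 11t+6$, after $O(\log t)$ inclusions one has $|P|\le t$, where Tomita's algorithm costs $3^{t/3}$.
\item With deterministic tie-breaking, each leaf is determined by its sequence of included vertices. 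This gives at most $(11t+6)^{O(\log t)}=2^{O(\log^2 t)}$ leaves per root.
\item Validation against $X$ costs only an extra factor, absorbed via $|E(G)|\le(11t+6)n$.
\end{itemize}
Because this threshold sits at half density, no dense/sparse interpolation is needed.

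Your dichotomy can also be rescued with the same lemma. Minimum degree at least $m-\sqrt m$ forces a strong $K_s$-immersion for every $s<m-2\sqrt m+4$, so $m<t+2\sqrt m$ at dense leaves. Otherwise each inclusion removes more than $\sqrt m$ candidates, giving $2^{O(\sqrt t\log t)}$ leaves. Either way, some such quantitative immersion lemma is indispensable, and it is absent from your proposal.
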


The resulting parameterized enumeration bounds for both structurally sparse graph classes are summarized in Table \ref{tab:results_comparison}.
\begin{table}[htpb]

    \centering
    \label{tab:results_comparison}
    \begin{tabular}{@{}lllc@{}}
        \toprule
        \textbf{Graph Class} & \textbf{Previous Best Bound} & \textbf{Our Bound} & \makecell[c]{\textbf{Optimal exponential}\\ \textbf{term?}} \\
        \midrule
        $K_t$-minor-free & $n \cdot 2^{O(t \log \log t)}$ \cite{ELS13} & $n \cdot 4^{2t/5 + o(t)}$ & \textbf{Yes} \\
        $K_t$-immersion-free & --- (Open problem) & $n \cdot 3^{t/3 + o(t)}$ & \textbf{Yes} \\
        \bottomrule
    \end{tabular}
    \caption{Maximal clique enumeration bounds on structurally sparse graphs.}
\end{table}

\textbf{Algorithmic Ideas.} 
Both algorithms assign every maximal clique to its first vertex in a degeneracy ordering. After fixing this root, the remaining search is carried out only inside its later neighborhood. The recursive state consists of a current clique $C$, a candidate set $P$, and a set $X$ of previously excluded later vertices that must be checked for maximality. This root assignment keeps the dependence on the number of vertices linear after summing over all roots. The sub-exponential error term $o(t)$ originates purely from the polylogarithmic depth limits of our localized recursion trees.

\begin{itemize}
    \item \textbf{The minor-free case}: The recursion is not analyzed only through degeneracy. Whenever the candidate graph becomes small we finish using the standard worst-case maximal-clique algorithm \cite{TTT04}, and whenever it becomes sufficiently dense we switch to an output-sensitive enumeration routine \cite{TIAS77}. The correctness of the dense stopping rule uses the relation between local density and clique minors due to Fox and Wei \cite{FW17}, together with a tight bound on the number of maximal cliques in dense terminal graphs. The remaining recursion is controlled by a peeling argument: each inclusion step removes a non-negligible number of candidates, so the number of terminal subproblems is only sub-exponential in $t$.
    \item \textbf{The immersion-free case}: The same root-assignment framework applies, but the structural reason for termination is different. A structural criterion of Fox and Wei \cite{FW20} implies that, above the threshold $|P| > t$, a minimum-degree branching vertex reduces the excess $|P| - t$ by a constant factor along inclusion branches. This potential-function analysis yields only $2^{O(\log^2 t)}$ terminal subproblems per root, and the candidate set in each terminal subproblem has at most $t$ vertices. The resulting bound is therefore $n \cdot 3^{t/3+o(t)}$.
\end{itemize}

Finally, the upper bounds are tight in the natural sense for enumeration algorithms: there are graphs in the corresponding classes with that many maximal cliques. Complete multipartite graphs with parts of size four give $n \cdot 4^{2t/5-o(t)}$ maximal cliques while excluding a $K_t$-minor, and complete multipartite graphs with parts of size three give $n \cdot 3^{t/3-o(t)}$ maximal cliques while excluding a $K_t$-immersion. Hence the exponential bases $4^{2t/5}$ and $3^{t/3}$ cannot be improved, up to sub-exponential factors in $t$.

\textbf{Organization.} Section \ref{sec:preliminaries} provides the prerequisite structural tools along with the dense-graph bound for maximal cliques required for terminal cases. Section \ref{sec:minor_alg} presents the algorithm for $K_t$-minor-free graphs and proves Theorem \ref{Thm:Alg}. Finally, Section \ref{sec:immersion_alg} gives the algorithm for $K_t$-immersion-free graphs and proves Theorem \ref{Thm:immersion}.
\section{Preliminaries}\label{sec:preliminaries}
This section provides the fundamental structural tools required to prove our main theorems. The core philosophy of our analysis draws upon the peeling process developed by Fox--Wei \cite{FW17,FW20} and Shi--Wei \cite{SW25} to systematically isolate dense subgraphs. This approach conceptually echoes the central principles of the hypergraph container method developed independently by Balogh--Morris--Samotij \cite{BMS15} and Saxton--Thomason \cite{ST15}. Specifically, this strategy shifts the enumeration burden from general graphs to locally dense settings. To rigorously define what constitutes a ``dense'' terminal graph in our minor-free context, we rely on the following result by Fox and Wei \cite{FW17}. For a graph $G$, let $h(G)$ denote the \emph{Hadwiger number} of  $G$, defined as the maximum order of a clique minor in $G$.
\begin{lemma}[Fox--Wei \cite{FW17}] \label{Lem:FW}
Let $G$ be a graph on $n$ vertices with clique number $\omega$, and let $\Delta $ be the maximum degree of the complement of $G$.
If $n \geq \omega + 2\Delta^2 + 2$, then $h(G) = \left\lfloor \frac{n+\omega}{2} \right\rfloor$.
\end{lemma}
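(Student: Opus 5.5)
The upper bound $h(G)\le \lfloor (n+\omega)/2\rfloor$ holds for every graph; the real content is the matching construction of a large clique minor. I would prove both directions directly.

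\emph{Upper bound.} Consider a clique minor with branch sets $B_1,\dots,B_h$. Let $a$ be the number of singleton branch sets. These singletons are pairwise adjacent, so $a\le\omega$. The other $h-a$ branch sets have size at least $2$. Since the branch sets are disjoint, $a+2(h-a)\le n$. Adding the two inequalities gives $2h\le n+\omega$, so $h\le\lfloor (n+\omega)/2\rfloor$.

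\emph{Lower bound.} Fix a maximum clique $Q$ with $|Q|=\omega$, and let $R=V(G)\setminus Q$. Every vertex of $Q$ becomes a singleton branch set. We want to pair up the vertices of $R$ into $\lfloor |R|/2\rfloor$ edges $\{u,v\}$ so that three conditions hold:
\begin{enumerate}[label=(\roman*)]
\item each pair is an edge of $G$;
\item each pair is adjacent to every vertex of $Q$, meaning every $q\in Q$ has a neighbor in $\{u,v\}$;
\item any two pairs are joined by at least one edge.
\end{enumerate}
This gives $\omega+\lfloor (n-\omega)/2\rfloor=\lfloor (n+\omega)/2\rfloor$ branch sets. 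An odd leftover vertex is discarded.

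The complement $\overline G$ has maximum degree $\Delta$. So a pair $\{u,v\}$ fails (ii) only if some $q$ is a common non-neighbor of $u$ and $v$. It fails (iii) with another pair $\{x,y\}$ only if all four cross pairs are non-edges. Both are highly constrained events.

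I would build the matching greedily or randomly. One option is a uniformly random perfect matching on $R$ restricted to edges of $G$; $G[R]$ has minimum degree at least $|R|-1-\Delta$, which is large. The alternative is a deterministic greedy argument. Each vertex $u$ has at most $\Delta$ non-neighbors. The partners $v$ that share a common non-neighbor $q$ with $u$ number at most $\Delta\cdot\Delta=\Delta^2$, so condition (ii) forbids at most about $\Delta^2$ partners for $u$. For condition (iii), a pair $\{u,v\}$ is entirely non-adjacent to $\{x,y\}$ only if $x$ is a common non-neighbor of $u$ and $v$. Hence each pair $\{u,v\}$ has at most $2\Delta$ such conflicts with other pairs, and they can be avoided.

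The cleanest route is likely a swapping argument. Start from any matching of $R$ satisfying (i) and (ii), which exists by a Dirac-type degree count using $|R|\ge 2\Delta^2+2$. Then repeatedly repair a conflicting pair-of-pairs $\{u,v\},\{x,y\}$ by swapping partners with a third pair chosen from the many available. This works because fewer than $\Delta^2$ choices are bad, and the count $n-\omega\ge 2\Delta^2+2$ guarantees a good choice exists.

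The main obstacle is exactly this simultaneous satisfaction of (ii) and (iii) with every vertex of $R$ used. Pure greedy leaves a few vertices stranded, so a potential-function or alternating-swap argument is needed. The threshold $2\Delta^2+2$ is precisely what should make each swap possible. Since the lemma is quoted from Fox--Wei \cite{FW17}, in the paper we would simply cite it.
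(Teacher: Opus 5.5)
Your upper bound is correct and complete. The gap is in the lower bound. The step you yourself call the main obstacle is never carried out. You claim that a matching satisfying (i) and (ii) can be repaired by swaps until (iii) also holds. But you give no invariant or potential showing that a swap fixing one conflicting pair of pairs does not create new violations of (ii) or (iii) elsewhere, and no argument that the process terminates. ``Should make each swap possible'' is not a proof, and the random-matching alternative is not analyzed either. The paper gives you nothing to lean on here: it simply cites Fox--Wei without proof.

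The missing idea is that no global repair is needed, because you can strengthen (ii) to a condition on each pair alone. Call an edge $uv$ of $G[R]$ \emph{good} if $u$ and $v$ have no common non-neighbor anywhere in $G$, i.e.\ $\overline{N}(u)\cap\overline{N}(v)=\emptyset$ in $\overline{G}$.
\begin{itemize}
\item A good pair dominates every vertex outside it, so it is adjacent to each $q\in Q$ and to every other pair. This gives (ii) and (iii) at once. You already observed that a (iii)-conflict forces $x$ to be a common non-neighbor of $u$ and $v$; you only forbade common non-neighbors inside $Q$.
\item For fixed $u$, the bad partners are its at most $\Delta$ non-neighbors plus at most $\Delta(\Delta-1)$ vertices at distance $2$ from $u$ in $\overline{G}$. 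That is at most $\Delta^2$ in total; your count $\Delta\cdot\Delta$ for (ii) alone, plus $\Delta$ for (i), would overshoot the threshold.
\item Hence the graph of good pairs on $R$ has minimum degree at least $|R|-1-\Delta^2\ge |R|/2$, exactly because $|R|=n-\omega\ge 2\Delta^2+2$. Here $\Delta\ge 1$, since $\Delta=0$ would give $\omega=n$ and violate the hypothesis, so $|R|\ge 4$.
\item Dirac's theorem then gives a Hamiltonian cycle, hence a matching of $\lfloor |R|/2\rfloor$ good pairs. Together with the $\omega$ singletons of $Q$, these form the required $K_{\lfloor (n+\omega)/2\rfloor}$ minor.
\end{itemize}
So the constant $2\Delta^2+2$ is precisely the Dirac threshold for this auxiliary graph, not a bound coming from any swap argument.
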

For the immersion setting, we rely on a result by Fox--Wei \cite{FW20}, which establishes that a minimum degree condition within any $t$-sized subset forces a strong $K_t$-immersion.
\begin{lemma}[Fox--Wei \cite{FW20}]\label{Lem:FW20}
    If a graph $G$ on $n$ vertices contains a vertex subset $T$ of size $t$ where every vertex in $T$ has a degree strictly greater than $\frac{n+t}{2}-2$, then $G$ contains a strong $K_t$-immersion with $T$ as the set of end vertices.
\end{lemma}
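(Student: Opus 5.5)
We prove the statement by routing every non-adjacent pair of $T$ along a path of length two through a vertex outside $T$. We reduce the existence of such routings to a list edge-colouring problem, which a greedy argument solves.

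\textbf{Setup.} Write $S=V(G)\setminus T$, so $|S|=n-t$. Let $H$ be the complement of $G[T]$, i.e.\ the graph on $T$ whose edges are the non-adjacent pairs of $T$, and let $m_u=\deg_H(u)$. Adjacent pairs in $T$ are routed by their own edge. Each edge $uv\in E(H)$ will be routed by a path $u\,w\,v$ with $w\in S\cap N(u)\cap N(v)$.

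\textbf{Why this gives a strong immersion.} Such paths have no internal vertex in $T$, so the immersion is strong. Their edges run between $T$ and $S$, so they are disjoint from the edges inside $T$. The only remaining requirement is that no edge $uw$ (with $u\in T$, $w\in S$) is used twice. This holds exactly when the map $c\colon E(H)\to S$, $uv\mapsto w$, is a proper edge colouring of $H$: two $H$-edges at $u$ with the same middle vertex $w$ would both use the edge $uw$, and conversely a repeated edge $uw$ can only arise from two $H$-edges sharing the endpoint $u$. Thus we need a proper edge colouring of $H$ in which each edge $uv$ receives a colour from the list $L(uv)=S\cap N(u)\cap N(v)$.

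\textbf{Counting the lists.} The hypothesis $d(u)>\frac{n+t}{2}-2$ gives $2d(u)\ge n+t-3$, since $2d(u)$ is an integer. The vertex $u$ has exactly $t-1-m_u$ neighbours in $T$, so
\[
2|N(u)\cap S| \;\ge\; n+t-3-2(t-1-m_u) \;=\; n-t-1+2m_u .
\]
For $uv\in E(H)$, inclusion–exclusion inside $S$ gives
\[
|L(uv)| \;\ge\; |N(u)\cap S|+|N(v)\cap S|-(n-t) \;\ge\; m_u+m_v-1 .
\]

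\textbf{Greedy colouring.} Colour the edges of $H$ one at a time. The edge $uv$ shares an endpoint with at most $(m_u-1)+(m_v-1)=m_u+m_v-2$ other edges of $H$. So when $uv$ is coloured, at most $m_u+m_v-2$ colours are forbidden. Since $|L(uv)|\ge m_u+m_v-1$, some colour in $L(uv)$ remains, and the greedy procedure never gets stuck. This list is nonempty because $m_u,m_v\ge1$.

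\textbf{Main obstacle.} The only delicate step is the integrality bookkeeping in the counting. It is needed so that the list size beats the conflict count $m_u+m_v-2$ by one, and the threshold $\frac{n+t}{2}-2$ appears calibrated to give exactly that. If this count were off by one, the fallback would be to use longer paths through $S$ or a Hall-type argument, but we do not expect to need either.
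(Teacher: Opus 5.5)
The paper gives no proof of this lemma. It is imported from Fox--Wei \cite{FW20} and used as a black box in Lemma~\ref{Lem:Alg2}, so there is no in-text argument to compare against. Your proposal is a correct and complete self-contained proof.

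The reduction is sound. Every routing path is either an edge inside $T$ or has the form $u\,w\,v$ with $u,v\in T$ and $w\in S$, so the two kinds never share an edge. Two length-two paths share an edge exactly when they have the same middle vertex and their $H$-edges share an endpoint. Hence edge-disjointness is precisely properness of $c$. Strongness is automatic, since all internal vertices lie in $S$.

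The count is also right. We have $|N(u)\cap S|=d(u)-(t-1-m_u)$. Adding the bounds for $u$ and $v$ and subtracting $|S|=n-t$ gives $|L(uv)|\ge m_u+m_v-1$. This is one more than the at most $m_u+m_v-2$ edges of $H$ meeting $uv$, so the greedy list edge-colouring never gets stuck.

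The integrality step you flag as delicate is harmless. You can keep the strict inequalities throughout, obtain $|L(uv)|>m_u+m_v-2$, and round once at the end. Neither longer paths nor a Hall-type argument is needed.
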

Finally, to analyze the termination conditions of our enumeration framework, we establish a tight combinatorial upper bound on the number of maximal cliques within highly dense subgraphs. For any graph $G$, let $v(G)$ denote the number of vertices, $\omega(G)$ the clique number, and $\mathrm{mc}(G)$ the total number of maximal cliques. 
\begin{lemma} \label{Lem:vtxCli}
    If $v(G)+\omega(G)\le s$, then $\mathrm{mc}(G) \le 4^{s/5}$.
\end{lemma}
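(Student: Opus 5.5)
The plan is to prove the slightly stronger statement $\mathrm{mc}(G)\le 4^{(v(G)+\omega(G))/5}$ by induction on $v(G)$. The case $s\ge v(G)+\omega(G)$ then follows by monotonicity. The base case $v(G)\le 1$ is trivial, since then $\mathrm{mc}(G)\le 1$.

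The induction step splits according to the complement. For a vertex $y$, write $\bar N[y]=V(G)\setminus N(y)$ for its closed non-neighbourhood; it contains $y$.

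\textbf{Case 1: some vertex $x$ is universal.} Every maximal clique of $G$ contains $x$. So the maximal cliques of $G$ are in bijection with those of $G-x$, and $\mathrm{mc}(G)=\mathrm{mc}(G-x)$. Since $v(G-x)=v(G)-1$ and $\omega(G-x)=\omega(G)-1$, induction gives $\mathrm{mc}(G)\le 4^{(v(G)+\omega(G)-2)/5}$, which is stronger than needed.

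\textbf{Case 2: no vertex is universal.} Choose $x$ maximizing $\deg(x)$, so that $k:=|\bar N[x]|\ge 2$ is the minimum of $|\bar N[y]|$ over all $y$.
\begin{itemize}
\item Every maximal clique $K$ of $G$ meets $\bar N[x]$: otherwise $K\subseteq N(x)$ and $K\cup\{x\}$ is a larger clique.
\item For each $a\in\bar N[x]$, the maximal cliques of $G$ containing $a$ inject, via $K\mapsto K\setminus\{a\}$, into the maximal cliques of $G[N(a)]$.
\item The graph $H_a=G[N(a)]$ has $v(H_a)=v(G)-|\bar N[a]|\le v(G)-k$ and $\omega(H_a)\le\omega(G)-1$.
\end{itemize}
Writing $s=v(G)+\omega(G)$ and applying induction to each $H_a$ gives
\[
\mathrm{mc}(G)\le \sum_{a\in \bar N[x]}\mathrm{mc}(H_a)\le k\cdot 4^{(s-k-1)/5}.
\]
(If $N(a)$ is empty, $\{a\}$ is the only maximal clique through $a$ and the bound still holds trivially.)

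It therefore remains to verify the numerical inequality $k\le 4^{(k+1)/5}$ for all integers $k\ge 2$. This is the only potentially delicate step, because the bound must be tight: equality occurs at $k=4$, matching the extremal complete multipartite graphs with parts of size four. Checking directly:
\begin{itemize}
\item $k=2$: $4^{3/5}\approx 2.30$;
\item $k=3$: $4^{4/5}\approx 3.03$;
\item $k=4$: $4^{1}=4$;
\item $k=5$: $4^{6/5}\approx 5.28$.
\end{itemize}
For $k\ge 5$, the ratio $4^{(k+1)/5}/k$ is increasing, since $4^{1/5}>1+1/k$. Hence $k\le 4^{(k+1)/5}$ for all $k\ge 2$, and therefore $\mathrm{mc}(G)\le 4^{s/5}$, completing the induction.
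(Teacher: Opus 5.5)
Your proof is correct, but it takes a different route from the paper.

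\textbf{The paper's route.} The paper applies Zykov symmetrization: it replaces a vertex $v$ by a clone of a non-adjacent vertex $u$ with $c(u)\ge c(v)$. This preserves $v(G)+\omega(G)\le s$ and does not decrease $\mathrm{mc}$. Repeating this, the paper reduces to complete multipartite graphs, where $\mathrm{mc}=\prod a_i$ and the constraint becomes $\sum(a_i+1)\le s$. It then bounds each factor by $a_i\le 4^{(a_i+1)/5}$.

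\textbf{Your route.} You run a Moon--Moser style induction on $v(G)$. If there is a universal vertex, you peel it off. Otherwise you cover all maximal cliques by the closed non-neighbourhood of a maximum-degree vertex, which gives the recursion $\mathrm{mc}(G)\le k\cdot 4^{(s-k-1)/5}$. Both arguments end at the same scalar inequality $k\le 4^{(k+1)/5}$, which is tight at $k=4$.

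\textbf{What each buys.} The paper's reduction shows the extremal structure at once: complete multipartite graphs with parts of size four, the family later reused for the lower bound. However, it checks only that a single symmetrization step does not decrease $\mathrm{mc}$. It leaves implicit why iterating these steps must end in a complete multipartite graph. Your induction needs no such structural reduction and is fully self-contained. You also handle the degenerate cases explicitly: isolated $a$, empty $H_a$, and $\omega$ dropping by one when a universal vertex is removed. Finally, your argument is itself a branching scheme, so it is closer in spirit to the recursive enumeration procedures the paper goes on to analyze.
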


\begin{proof}[Proof of Lemma \ref{Lem:vtxCli}]

Let $G$ be a graph maximizing the number of maximal cliques $\mathrm{mc}(G)$ subject to
the constraint $v(G) + \omega(G) \le s$. By Zykov symmetrization \cite{Zyk49}, we may
assume  that $G$ is a complete $k$-partite graph for some
$k \ge 2$. To see this, for each vertex $x \in V(G)$, let $c(x)$ denote the number of
maximal cliques of $G$ containing $x$. Given any non-adjacent pair $u, v \in V(G)$. Without loss of generality, let $c(u) \ge c(v)$. We construct a new graph $G'$ by replacing $v$ with a clone $u'$ of
$u$, that is, a new vertex $u'$ that is non-adjacent to $u$ and shares the same
neighborhood as $u$. Since no larger cliques are introduced,  the constraint $v(G') + \omega(G') \le s$ is preserved. Note that removing $v$ destroys at most $c(v)$ maximal cliques, while adding $u'$ creates exactly $c(u)$ new maximal cliques. Therefore,
\begin{equation*}
    \mathrm{mc}(G') \ge \mathrm{mc}(G) - c(v) + c(u) \ge \mathrm{mc}(G).
\end{equation*}

Let $G$ be a complete $k$-partite graph with part sizes $a_1, a_2, \dots, a_k \ge 1$. Then 
    \[
     v(G) = \sum_{i=1}^k a_i, \quad \omega(G) = k, \quad {\text and }\quad \text{mc}(G) = \prod_{i=1}^k a_i. 
    \]
It follows from  $v(G) + \omega(G) \le s$ that $\sum_{i=1}^k (a_i + 1) \le s$.
    To bound the product $\prod_{i=1}^k a_i$, we decouple the variables by seeking a tight constant base $C > 1$ such that $a_i \le C^{a_i+1}$ for all integers $a_i \ge 1$. This yields
    \begin{align}\label{ine:uppMC}
        \mathrm{mc}(G) = \prod_{i=1}^k a_i \le \prod_{i=1}^k C^{a_i + 1} = C^{\sum_{i=1}^k (a_i + 1)} \le C^s.
    \end{align}

To complete the proof, it suffices to find the smallest $C > 1$ satisfying
$C \ge a^{1/(a+1)}$ for all positive integers $a$. Consider the function
$f(x) = x^{1/(x+1)}$ for $x \ge 1$. A straightforward calculus argument shows that
there exists $x_0 \in (3, 4)$ such that $f$ is strictly increasing on $[1, x_0]$ and
strictly decreasing on $[x_0, +\infty)$. Since $x_0$ lies between $3$ and $4$, the
maximum of $f$ over positive integers is attained at either $a = 3$ or $a = 4$.
Comparing the two values,
$f(3) = 3^{1/4} \approx 1.3161 < 4^{1/5} \approx 1.3195 = f(4),$
so $f$ attains its maximum over positive integers at $a = 4$, giving
$C = 4^{1/5}$. Substituting into \eqref{ine:uppMC}, we obtain
\begin{equation*}
    \mathrm{mc}(G) \le 4^{s/5}. \qedhere
\end{equation*}
\end{proof}


%
\section{Maximal cliques in $K_t$-minor-free graphs}\label{sec:minor_alg}
In this section, we prove Theorem \ref{Thm:Alg}. Before that, we  give the revised enumeration framework. The main technical point is that the recursion is not run directly on the whole graph with a large global exclusion set. Instead, we first assign every maximal clique to its first vertex in a fixed degeneracy order. This removes the possible quadratic overhead caused by repeatedly scanning large exclusion sets.
To efficiently terminate the search and prune the recursion tree, we integrate two standard exact enumeration subroutines:
\begin{itemize}
    \item \textbf{The Tomita--Tanaka--Takahashi Algorithm \cite{TTT04}:} This optimized variant of the Bron--Kerbosch algorithm \cite{BK73} utilizes a pivot selection strategy to rigorously bound the worst-case time complexity. It guarantees the generation of all maximal cliques in an $n$-vertex graph in $O(3^{n/3})$ time. We deploy this subroutine to handle small remaining subgraphs (Step 3). We denote a call to this subroutine on a graph $H$ as $\text{\textbf{OptimizedBK}}(H)$.
    
    \item \textbf{The Tsukiyama--Ide--Ariyoshi--Shirakawa Algorithm \cite{TIAS77}:} This is a seminal output-sensitive enumeration algorithm. It generates all maximal cliques in a graph $G$ in time $O(|V||E| \cdot \mathrm{mc}(G))$.  We apply this algorithm to highly dense terminal graphs (Step 4). 
    We denote a call to this subroutine on a graph $H$ as $\text{\textbf{OutputSensitiveEnum}}(H)$.
\end{itemize}

\begin{algorithm}[h]
\small
\caption{EnumerateMaximalCliquesKtMinorFree}
\label{alg:enum_kt_minor_free}
\KwIn{Graph $G(V, E)$, integer $t$}
\KwOut{$\mathcal{M}(G)$ (All maximal cliques in $G$)}
\SetKwProg{Fn}{Procedure}{}{}

\Fn{ProcEnum($C, P, X$)}{
    \tcc{Step 1: Base Case Termination}
    \If{$P = \emptyset$}{
        \lIf{$X = \emptyset$}{\textbf{Output} $C$ as a maximal clique}
        \Return\;
    }

    \tcc{Step 2: Peeling Vertex Selection (Break ties by smallest degeneracy index)}
    $v^* \leftarrow \arg\min_{v \in P} \deg_{G[P]}(v)$\;
    $d^* \leftarrow \deg_{G[P]}(v^*)$\;

    \tcc{Step 3: Pruning via Small Remaining Graph}
    \If{$|P| \le 1.5t$}{
        $\mathcal{K} \leftarrow \text{OptimizedBK}(G[P])$ 
        
        \For{each $K \in \mathcal{K}$}{
            \lIf{$\forall x \in X, \exists v \in K \text{ s.t. } \{x, v\} \notin E$}{\textbf{Output} $C \cup K$}
        }
        \Return\;
    }

    \tcc{Step 4: Pruning via Highly Dense Terminal Graph}
    \If{$d^* \ge |P| - 0.15\sqrt{|P|}$}{
        $\mathcal{K} \leftarrow \text{OutputSensitiveEnum}(G[P])$ 
        
        \For{each $K \in \mathcal{K}$}{
            \lIf{$\forall x \in X, \exists v \in K \text{ s.t. } \{x, v\} \notin E$}{\textbf{Output} $C \cup K$}
        }
        \Return\;
    }

    \tcc{Step 5: Peeling Process Branching}
    ProcEnum($C \cup \{v^*\}, P \cap N(v^*), X \cap N(v^*)$) \tcc*[r]{Branch A: Include $v^*$}

    ProcEnum($C, P \setminus \{v^*\}, X \cup \{v^*\}$) \tcc*[r]{Branch B: Exclude $v^*$}
}
\BlankLine
\Fn{Main()}{
        Compute a degeneracy ordering $v_1, v_2, \dots, v_n$ of $G$\;
        \For{$i \leftarrow 1$ \KwTo $n$}{
            $P \leftarrow N(v_i) \cap \{v_{i+1}, \dots, v_n\}$\;
            $X \leftarrow N(v_i) \cap \{v_1, \dots, v_{i-1}\}$\;
            ProcEnum($\{v_i\}, P, X$)\;
        }
    }
\end{algorithm}

Let $v_1, \dots, v_n$ be a degeneracy ordering of $G$, and write the forward and backward neighborhoods as:
\begin{align*}
    N^+(v_i) = \{v_j : j > i, v_iv_j \in E(G)\}, \quad N^-(v_i) = \{v_j : j < i, v_iv_j \in E(G)\}.
\end{align*}

\textbf{Correctness.} We first record the invariant maintained by the recursive call \textbf{ProcEnum}($C,$ $ P, X$) initiated at root $v_i$. The set $C$ is a clique containing $v_i$. The set $P$ consists exactly of the not-yet-decided vertices in $N^+(v_i)$ that are adjacent to every vertex of $C$. The set $X$ consists exactly of the already excluded vertices (combining the initial backward neighborhood $N^-(v_i)$ and dynamically excluded vertices) that are still adjacent to every vertex of $C$. The invariant is immediate at the root. When branching on $v^*$, its inclusion restricts both candidates and excluded vertices to $N(v^*)$, whereas its exclusion simply transitions it from $P$ to $X$.

Let $M$ be any maximal clique of $G$, and let $v_i$ be the first vertex of $M$ in the degeneracy ordering. Then $M = \{v_i\} \cup S$ for some $S \subseteq N^+(v_i)$. In the recursion rooted at $v_i$, follow the unique path that includes a branching vertex precisely when it belongs to $S$ and excludes it otherwise. When this path reaches a terminal state $(C, P, X)$ that invokes one of the exact subroutines, the residual set $M \setminus C$ is a maximal clique of $G[P]$; otherwise it could be enlarged by a vertex of $P$, contradicting the maximality of $M$. If the path reaches the base case $P = \emptyset$, then the residual set is empty. 
Hence the corresponding local clique $K = M \setminus C$ is generated by the terminal subroutine whenever a subroutine is called. Since $M$ is globally maximal, no vertex in $X$ is adjacent to all vertices of $C \cup K$, so the explicit output test accepts $M$.

Conversely, suppose the algorithm outputs $C \cup K$. The invariant implies that this set is a clique. It cannot be extended by a vertex of $P$ because $K$ is maximal in $G[P]$ (and in the base case $P = \emptyset$). It cannot be extended by a vertex of $X$ due to the explicit output test ($\forall x \in X, \exists v \in K \text{ s.t. } \{x,v\} \notin E$). Finally, any vertex not in $C \cup P \cup X$ is either not adjacent to the root $v_i$, or failed adjacency to a previously selected vertex of $C$, and hence cannot extend the clique. Thus, every output is a globally maximal clique. The choice of the first vertex $v_i$ in the degeneracy order also implies uniqueness: no maximal clique can be accepted from two different roots.

\textbf{Time Complexity.} To compute the overall time complexity of Algorithm \ref{alg:enum_kt_minor_free}, we must first establish the maximum size of the recursion tree. Our analysis relies fundamentally on the sparsity of $K_t$-minor-free graphs. By Thomason's result \cite{T01}, every $K_t$-minor-free graph has degeneracy
\begin{align*}
     d = O(t\sqrt{\log t}). 
\end{align*}
Consequently, the initial candidate set is bounded by $|N^+(v_i)| \le d$ for every root $v_i$. We use this critical property to bound the number of terminal subproblems generated before the algorithm delegates to the exact subroutines.
\begin{lemma}\label{lem:recursion_depth}
For each fixed root $v_i$, Algorithm \ref{alg:enum_kt_minor_free} generates at most $2^{O(t^{1/2}\log_2^{5/4} t)}$ terminal subproblems before calling one of the two exact subroutines.
\end{lemma}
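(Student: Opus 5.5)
We bound the number of root-to-leaf paths in the recursion tree rooted at $v_i$ by tracking the size $p=|P|$ along each path. The initial candidate set satisfies $p\le d=O(t\sqrt{\log t})$. Every internal node of the tree has $p>1.5t$ and $d^*<p-0.15\sqrt{p}$, since otherwise Step 3 or Step 4 would terminate it. Along any path, each exclusion step (Branch B) decreases $p$ by exactly one. Each inclusion step (Branch A) replaces $P$ by $P\cap N(v^*)$, which has size $d^*<p-0.15\sqrt p$. So an inclusion removes at least $0.15\sqrt{p}\ge 0.15\sqrt{1.5t}=\Omega(\sqrt t)$ candidates. This is the peeling step.

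From this, the number of inclusion steps on any path is at most $d/\Omega(\sqrt t)=O(\sqrt{t\log t})$. The number of exclusion steps is at most $d$. A path is determined by its sequence of branch choices, so the number of leaves is at most the number of binary strings of length at most $d$ containing at most $k=O(\sqrt{t\log t})$ ones. This is at most
\[
\sum_{j\le k}\binom{d+k}{j}\le (d+k)^{k+1}=2^{O(\sqrt{t\log t}\,\log t)}=2^{O(t^{1/2}\log^{3/2}t)}.
\]
This is close to the claimed $2^{O(t^{1/2}\log^{5/4} t)}$ but slightly weaker.

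To reach the exponent $\log^{5/4}$, I would refine the count rather than use the crude $\binom{d}{k}$. The only way this naive count is wasteful is that exclusions can occur anywhere among $d$ positions. I would first try to reduce the effective length of the string. Exclusions only matter while $p>1.5t$, so at most $d-1.5t=O(t\sqrt{\log t})$ of them occur. More usefully, between consecutive inclusions I would use the minimum-degree choice of $v^*$. Since $v^*$ has minimum degree in $G[P]$ and $d^*<p-0.15\sqrt p$, every vertex of $G[P]$ would have to lose many non-neighbours before Step 4 fires. I would try to show that a run of exclusions of length $\ell$ forces the surviving graph toward density $p-0.15\sqrt p$ only after $\ell$ is about $\sqrt{p\log t}$ or similar. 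Balancing the number of inclusions, about $\sqrt{t}\,\log^{1/4}t$, against the choices per inclusion, about $\log t$ bits each, would give $2^{O(t^{1/2}\log^{5/4}t)}$. Concretely, I would encode each path by the gaps between inclusions and bound each gap by $\mathrm{poly}(t)$. That costs $O(\log t)$ bits per inclusion. I would then try to tighten the inclusion count to $O(\sqrt t\log^{1/4} t)$ via the decrease $0.15\sqrt p$ with $p$ ranging over $[1.5t,O(t\sqrt{\log t})]$. Summing $1/\sqrt p$-sized steps gives about $\sqrt{d}=O(\sqrt t\log^{1/4}t)$ inclusions, since $\sum \Delta p/\sqrt p\approx 2\sqrt{d}$.

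The last point is the key computation. The decrement at size $p$ is $0.15\sqrt p$, so the number of inclusions is $O(\sqrt{d})=O(t^{1/2}\log^{1/4}t)$, not $d/\sqrt t$. Each inclusion, together with the preceding run of exclusions, is encoded in $O(\log d)=O(\log t)$ bits. The total is therefore $2^{O(t^{1/2}\log^{5/4}t)}$, as claimed. The main obstacle is justifying that per-inclusion encoding, that is, bounding the total number of exclusion patterns. The simple bound $\binom{d+k}{k}\le (e(d+k)/k)^k$ already gives exactly this: with $k=O(\sqrt t\log^{1/4}t)$ we have $\log(d/k)=O(\log t)$. So I would present it that way, with the integral estimate for $k$ as the crux.
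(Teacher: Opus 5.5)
Your final argument is correct and is essentially the paper's proof. The drop of at least $0.15\sqrt{p}$ at every inclusion bounds the number of inclusions on any path by $O(\sqrt{d})=O(t^{1/2}\log^{1/4}t)$; your integral estimate, equivalently the fact that $\sqrt{p}$ falls by at least $0.075$ per inclusion, does the job of the paper's dyadic-interval sum. Each inclusion then costs only $O(\log t)$ bits, giving $2^{O(\sqrt{d}\log d)}=2^{O(t^{1/2}\log^{5/4}t)}$.

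The remaining difference is cosmetic. You encode a leaf by its include/exclude string of length at most $d$, while the paper encodes it by the ordered list of included vertices from $N^+(v_i)$. Your count $\sum_{j\le k}\binom{d+k}{j}$ is valid only because leaf strings are prefix-free, so zero-padding them to length $d$ is injective; otherwise you lose a harmless factor $d+1$. The speculative paragraph about runs of exclusions forcing density is never used and should be dropped.
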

\begin{proof}
Consider any root-to-leaf path before a terminal condition is triggered, and let $p_j$ denote the size of $P$ immediately before the $j$-th inclusion branch on this path. Since $P$ is always contained in $N^+(v_i)$, we have $p_1 \le d = O(t\sqrt{\log t})$. Whenever the dense terminal condition fails, the chosen minimum-degree vertex $v^*$ satisfies
\begin{align*}
     \deg_{G[P]}(v^*) < |P| - 0.15\sqrt{|P|}. 
\end{align*}
After the inclusion branch, the new candidate set is $P \cap N(v^*)$, ensuring the strict reduction
\begin{align*}
     p_{j+1} < p_j - 0.15\sqrt{p_j}. 
\end{align*}
As long as the small-graph terminal condition is not triggered triggered, $p_j > 1.5t$. To bound the total number of inclusion branches $r$, we partition the range $(1.5t, d]$ into dyadic intervals $(d_{i+1}, d_i]$ where $d_i = d/2^i$. The number of steps $p_j$ spends descending through the $i$-th interval is at most $1 + \frac{d_i - d_{i+1}}{0.15\sqrt{d_{i+1}}} = 1 + \frac{20}{3}\left(\frac{d}{2^{i+1}}\right)^{1/2}$. Summing across all intervals and utilizing the convergent series $\sum_{i=0}^\infty 2^{-(i+1)/2} < 2.5$, the total inclusion depth $r$ is bounded by
\begin{align*}
    r &\le \sum_{i=0}^{\lfloor \log_2 \frac{d}{1.5t} \rfloor} \left( 1 + \frac{20}{3}\left(\frac{d}{2^{i+1}}\right)^{1/2} \right) < {\lceil \log_2 \frac{d}{1.5t} \rceil} + \frac{50}{3}d^{1/2} = O(t^{1/2} \log^{1/4} t) := r_0. 
\end{align*}

We now account for the exclusion branches. By the tie-breaking rule specified in Step 2, we break any minimum-degree ties by selecting the vertex with the smallest index in the degeneracy ordering. Since this tie-breaking rule is deterministic, any root-to-leaf path is uniquely identified by its sequence of included vertices: starting from the root, simulate the recursion by taking exclusion branches until the next vertex in the given sequence is selected as the minimum-degree vertex, take the inclusion branch, and repeat; once the sequence is exhausted, take only exclusion branches until termination. Consequently, every distinct terminal node must originate from a unique inclusion sequence. Since all included vertices are drawn from $N^+(v_i)$, the total number of terminal nodes generated from the root $v_i$ is at most
\begin{equation*}
    \sum_{\ell=0}^{r_0} |N^+(v_i)|^\ell \le d^{r_0+1} = 2^{O(t^{1/2} \log^{5/4} t)}. \qedhere
\end{equation*}
\end{proof}

With the recursion tree bounded by Lemma \ref{lem:recursion_depth}, we now evaluate the computational cost at the leaves. At a small terminal subproblem (Step 3), we have $|P| \le 1.5t$, so \textbf{OptimizedBK} costs at most
\begin{align*}
     O(3^{1.5t/3}) = O(3^{t/2}). 
\end{align*}
At a highly dense terminal subproblem, the complement of $G[P]$ has maximum degree $\Delta \le 0.15\sqrt{|P|}$. Since the small terminal condition did not apply, $|P| > 1.5t$, while $\omega(G[P]) < t$. For sufficiently large $t$,
\begin{align*}
    |P| - \omega(G[P]) > |P| - t > \frac{|P|}{3} \ge 2(0.15\sqrt{|P|})^2 + 2 \ge 2\Delta^2 + 2.
\end{align*}
Thus Lemma \ref{Lem:FW} applies, and since $G[P]$ has no $K_t$-minor,
\begin{align*}
     \left\lfloor\frac{|P| + \omega(G[P])}{2}\right\rfloor = h(G[P]) \le t - 1. 
\end{align*}
Consequently $|P| + \omega(G[P]) < 2t$. Lemma \ref{Lem:vtxCli} gives
\begin{align*}
     \text{mc}(G[P]) \le 4^{2t/5}, 
\end{align*}
and since $|P| \le 2t$, \textbf{OutputSensitiveEnum} running in $O(|V||E|\cdot \text{mc}(G))$ costs at most $\text{poly}(t)4^{2t/5}$ on this leaf.

For a fixed root $v_i$, the total time spent generating candidate outputs across all terminal subroutines is at most $4^{2t/5+o(t)}$, since the local enumeration cost $4^{2t/5}$ dominates $3^{t/2}$ and Lemma \ref{lem:recursion_depth} contributes only $2^{o(t)}$ terminal nodes. Summing over all $n$ roots in the degeneracy ordering, the global generation phase of the algorithm takes at most $n \cdot 4^{2t/5+o(t)}$ time.

We now verify that the local output tests against the exclusion set $X$ preserve this claimed linear dependence on $n$. Validating each candidate requires checking its adjacency against all vertices in $X$. Since the maximum clique size is bounded by $t$, testing a candidate against a single vertex in $X$ takes $O(t)$ time. This polynomial overhead is safely absorbed into the $2^{o(t)}$ factor. The exclusion set $X$ consists of two parts: the initial backward neighborhood $N^-(v_i)$, and the vertices dynamically excluded from $P$.

Because the dynamically excluded vertices are drawn from $N^+(v_i)$, their number is bounded by the degeneracy $d = O(t\sqrt{\log t}) $. Checking this dynamic part against the outputs of $v_i$ takes $O(t\sqrt{\log t}) \cdot 4^{2t/5+o(t)} = 4^{2t/5+o(t)}$ time. Across all $n$ roots, this specific cost contributes at most $n \cdot 4^{2t/5+o(t)}$.

For the static part, checking all candidate outputs against $N^-(v_i)$ over all roots costs explicitly
\begin{align*}
    \sum_{i=1}^n |N^-(v_i)| 4^{2t/5+o(t)} = |E(G)| 4^{2t/5+o(t)}. 
\end{align*}
As $G$ is $d$-degenerate, we have $|E(G)| \le dn = n\cdot O(t\sqrt{\log t})$, meaning this static checking cost is also bounded by $n \cdot 4^{2t/5+o(t)}$. 

The total running time of the algorithm is the sum of the time spent generating the candidate outputs and the time spent validating them. Since the generation phase across all $n$ roots takes $n \cdot 4^{2t/5+o(t)}$ time, and both validation costs (against the dynamic and static parts of $X$) are asymptotically bounded by the same term, the overall running time is bounded by
\begin{align*}
     n \cdot 4^{2t/5+o(t)}. 
\end{align*}

\textbf{Asymptotic Optimality in the exponential term.} Consider an $n$-vertex graph formed
by taking disjoint copies of the complete multipartite graph $K_{4,4,\dots,4}$ with $\left\lfloor (2t-1)/5 \right\rfloor$ parts,
plus isolated vertices if needed. Each component has $4\left\lfloor (2t-1)/5 \right\rfloor$ vertices and exactly
$4^{\left\lfloor (2t-1)/5 \right\rfloor}$ maximal cliques. By Lemma \ref{Lem:FW}, the Hadwiger number of each component is at
most $t-1$, so the graph is $K_t$-minor-free. Hence the output size is
\begin{align*}
     n \cdot 4^{2t/5 - o(t)}.
\end{align*}
Thus the upper bound is optimal in its exponential dependence on $t$, up to the $o(t)$ term in the exponent.
%


\section{Maximal cliques in $K_t$-immersion-free graphs}\label{sec:immersion_alg}

In this section, we prove  Theorem \ref{Thm:immersion} by analyzing a customized algorithm (\textbf{Algorithm \ref{alg:enum_kt_immersion_free}}) tailored for $K_t$-immersion-free graphs. Unlike the framework for minor-free graphs, Algorithm \ref{alg:enum_kt_immersion_free} omits the density-driven pruning step. Instead, it relies on tracking the localized minimum degree to dynamically shrink the candidate set until it falls below a strict size threshold ($|P| \le t$), at which point it delegates the residual graph to the optimal exact subroutine.
\begin{algorithm}[h]
\caption{EnumerateMaximalCliquesKtImmersionFree}
\label{alg:enum_kt_immersion_free}
\KwIn{Graph $G(V, E)$, integer $t$}
\KwOut{$\mathcal{M}(G)$ (All maximal cliques in $G$)}
\SetKwProg{Fn}{Procedure}{}{}

\Fn{ProcEnum($C, P, X$)}{
    \tcc{Step 1: Base Case Termination}
    \If{$P = \emptyset$}{
        \lIf{$X = \emptyset$}{\textbf{Output} $C$ as a maximal clique}
        \Return\;
    }

    \tcc{Step 2: Peeling Vertex Selection (Break ties by smallest degeneracy index)}
    $v^* \leftarrow \arg\min_{v \in P} \deg_{G[P]}(v)$\;
    $d^* \leftarrow \deg_{G[P]}(v^*)$\;

    \tcc{Step 3: Pruning via Small Remaining Graph}
    \If{$|P| \le t$}{
        $\mathcal{K} \leftarrow \text{OptimizedBK}(G[P])$ 
        
        \For{each $K \in \mathcal{K}$}{
            \lIf{$\forall x \in X, \exists v \in K \text{ s.t. } \{x, v\} \notin E$}{\textbf{Output} $C \cup K$}
        }
        \Return\;
    }

    \tcc{Step 4: Peeling Process Branching}
    ProcEnum($C \cup \{v^*\}, P \cap N(v^*), X \cap N(v^*)$) \tcc*[r]{Branch A: Include $v^*$}

    ProcEnum($C, P \setminus \{v^*\}, X \cup \{v^*\}$) \tcc*[r]{Branch B: Exclude $v^*$}
}
\BlankLine
\Fn{Main()}{
        Compute a degeneracy ordering $v_1, v_2, \dots, v_n$ of $G$\;
        \For{$i \leftarrow 1$ \KwTo $n$}{
            $P \leftarrow N(v_i) \cap \{v_{i+1}, \dots, v_n\}$\;
            $X \leftarrow N(v_i) \cap \{v_1, \dots, v_{i-1}\}$\;
            ProcEnum($\{v_i\}, P, X$)\;
        }
    }
\end{algorithm}

\textbf{Correctness.} The correctness proof is identical to that of Algorithm \ref{alg:enum_kt_minor_free}. Every global maximal clique is uniquely assigned to its first vertex in the fixed degeneracy ordering. The local recursion decides which forward neighbors are included, and the explicit validity test against the exclusion set $X$ at the leaf nodes mirrors the condition that no earlier neighbor (from the backward neighborhood $N^-(v_i)$) and no explicitly excluded later neighbor (from the forward neighborhood $N^+(v_i)$) can extend the clique.

\textbf{Time Complexity.} To compute the overall time complexity of Algorithm \ref{alg:enum_kt_immersion_free}, we must first bound the number of terminal subproblems generated from any fixed root $v_i$ before the recursion terminates (Step 1) or delegates to the exact subroutine (Step 3). To achieve this, we introduce a potential function $\Phi(P) = \max(0, |P| - t)$ that quantifies the ``excess size'' of the candidate set relative to the pruning threshold, allowing us to mathematically bound the maximum depth of the inclusion branches.
\begin{lemma}\label{Lem:Alg2}
    For each fixed root $v_i$, Algorithm \ref{alg:enum_kt_immersion_free} generates at most $2^{O(\log^2 t)}$ terminal subproblems before calling the exact subroutine.
\end{lemma}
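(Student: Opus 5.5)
The plan is to mirror the proof of Lemma \ref{lem:recursion_depth}. First we show that along any root-to-leaf path the number of inclusion branches is only $O(\log t)$. Then we count terminal nodes by their inclusion sequences.

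\textbf{Step 1: initial size of $P$.} For the initial bound I would use the known fact that $K_t$-immersion-free graphs have degeneracy $d = O(t)$ (DeVos et al.; Gauthier--Le--Wollan give $d \le 7t+7$). This is an external result not stated in the paper, so it must be cited. Every candidate set is contained in $N^+(v_i)$, so $\Phi(P) \le |N^+(v_i)| \le d = O(t)$ at the root.

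\textbf{Step 2: the potential halves on inclusion branches.} Consider a non-terminal node, so $|P| > t$. I claim the chosen vertex satisfies $\deg_{G[P]}(v^*) \le \frac{|P|+t}{2} - 2$. Suppose not. Then every vertex of $P$ has degree in $G[P]$ strictly greater than $\frac{|P|+t}{2}-2$, because $v^*$ has minimum degree. Pick any $t$-subset $T \subseteq P$. Lemma \ref{Lem:FW20}, applied to the graph $G[P]$ on $|P|$ vertices, yields a strong $K_t$-immersion in $G[P] \subseteq G$, a contradiction. Hence after the inclusion branch the new candidate set $P' = P \cap N(v^*)$ satisfies
\begin{align*}
|P'| - t \le \frac{|P|+t}{2} - 2 - t = \frac{|P|-t}{2} - 2,
\end{align*}
so $\Phi(P') \le \Phi(P)/2 - 2$. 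Exclusion branches never increase $\Phi$. The potential starts at most $d = O(t)$ and is halved (minus 2) at each inclusion step taken while $\Phi>0$. Therefore at most $r_0 = \lceil \log_2 d \rceil + 1 = O(\log t)$ inclusion branches can occur before $|P| \le t$, at which point Step 3 terminates.

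\textbf{Step 3: counting terminal nodes.} I would reuse the argument from Lemma \ref{lem:recursion_depth} verbatim. The minimum-degree selection with deterministic tie-breaking means each root-to-leaf path is determined by its sequence of included vertices. Between consecutive inclusions we simply follow exclusion branches until the designated vertex is selected, and after the last inclusion we follow exclusions until termination. All included vertices lie in $N^+(v_i)$ and there are at most $r_0$ of them. So the number of terminal nodes is at most
\begin{align*}
\sum_{\ell=0}^{r_0} d^{\ell} \le d^{r_0+1} = 2^{O(\log t \cdot \log t)} = 2^{O(\log^2 t)}.
\end{align*}

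\textbf{Main obstacle.} The delicate point is Step 2: checking that Lemma \ref{Lem:FW20} is applied to the induced graph $G[P]$ with $n = |P|$, and that an immersion in a subgraph is an immersion in $G$. The degeneracy bound $d = O(t)$ is not stated in the paper, so it must be imported. Without it, the initial potential could be as large as $n$, and the depth bound would depend on $\log n$ rather than $\log t$.
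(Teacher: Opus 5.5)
Your proposal is correct and follows the paper's proof essentially step for step. The shared steps are: an $O(t)$ degeneracy bound caps the initial excess $\Phi$; Lemma \ref{Lem:FW20} applied to $G[P]$ forces the minimum-degree vertex to have degree at most $\frac{|P|+t}{2}-2$, so $\Phi$ at least halves on every inclusion branch; and deterministic tie-breaking lets you count leaves by inclusion sequences of length $O(\log t)$ drawn from $N^+(v_i)$. The differences are cosmetic: the paper imports the $(11t+6)$-degeneracy bound rather than $7t+O(1)$, and it writes the potential step as $\Phi(P')\le\max\{0,\frac{|P|-t}{2}-2\}$, which is the precise form of your $\Phi(P')\le\Phi(P)/2-2$ (yours needs the $\max\{0,\cdot\}$ since $\Phi$ is nonnegative).
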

\begin{proof}
Let $P^{(j)}$ be the candidate set after $j$ inclusion branches from root $v_i$, with $P^{(0)} = N^+(v_i)$. Since $K_t$-immersion-free graphs are $(11t+6)$-degenerate \cite{DDZF+14,DY18}, $|P^{(0)}| \le 11t+6$.
Prior to pruning ($|P^{(j-1)}| > t$), Branch A sets $P^{(j)} = P^{(j-1)} \cap N(v^*)$, where $v^* = \arg\min_{v \in P^{(j-1)}} \deg_{G[P^{(j-1)}]}(v)$. 
Since we are prior to pruning, $|P^{(j-1)}| > t$.
If the minimum degree of $G[P^{(j-1)}]$ strictly exceeded $\frac{|P^{(j-1)}|+t}{2} - 2$, we could arbitrarily select a subset $T \subseteq P^{(j-1)}$ of size $t$. Every vertex in $T$ would satisfy the high-degree condition of Lemma \ref{Lem:FW20}, forcing a $K_t$-immersion and yielding a contradiction.
Hence, 
\begin{align*}
    \deg_{G[P^{(j-1)}]}(v^*) \le \frac{|P^{(j-1)}|+t}{2} - 2.
\end{align*}
Thus, 
\begin{align*}
    \Phi(P^{(j)}) \le \max\{0, \frac{|P^{(j-1)}| - t}{2} - 2\} < \frac{1}{2}\Phi(P^{(j-1)}).
\end{align*}
Given the initial excess $\Phi(P^{(0)}) \le 10t+6$, the pruning condition $\Phi(P) = 0$ is universally triggered after at most $r_0 = \lceil \log_2(10t+6) \rceil$ inclusion branches.

We now account for the exclusion branches (Branch B). By the tie-breaking rule specified in Step 2, we break any minimum-degree ties by selecting the vertex with the smallest index in the degeneracy ordering. Since this choice is deterministic, one can reconstruct any root-to-leaf path solely from its ordered sequence of included vertices: starting from the root state $P^{(0)}$, simulate the recursion by taking exclusion branches until the next vertex in the given sequence is selected as the minimum-degree vertex, take the inclusion branch, and repeat; once the sequence is exhausted, take only exclusion branches until termination. 

Consequently, every distinct terminal node must originate from a unique inclusion sequence. Since all included vertices are drawn from $N^+(v_i)$, and any such sequence contains at most $r_0$ vertices, the total number of terminal nodes generated from the root $v_i$ is bounded by
\begin{equation*}
    \sum_{\ell=0}^{r_0} |N^+(v_i)|^\ell < (11t+6)^{r_0+1} = 2^{O(\log^2 t)}. \qedhere
\end{equation*}
\end{proof}
With the recursion tree bounded by Lemma \ref{Lem:Alg2}, we evaluate the computational cost at the leaves. The generated leaves exclusively correspond to the two termination conditions in Algorithm \ref{alg:enum_kt_immersion_free}: the base case ($P = \emptyset$) and the small-graph pruning ($|P| \le t$). 
At a base case leaf, excluding the later validation against the exclusion set $X$, the computational cost is bounded by $O(t)$ since $\omega(G) < t$ in a $K_t$-immersion-free graph. 
At a pruning leaf, the exact subroutine \textbf{OptimizedBK} enumerates all local maximal cliques in $O(3^{|P|/3}) \le O(3^{t/3})$ time. Since the exponential term $3^{t/3}$ asymptotically dominates $O(t)$, the maximum enumeration cost at any terminal subproblem is bounded by $O(3^{t/3})$.

For a fixed root $v_i$, the total time spent generating candidate outputs across all terminal subroutines is at most $3^{t/3+o(t)}$, since the local enumeration cost $O(3^{t/3})$ dominates and Lemma \ref{Lem:Alg2} contributes only $2^{O(\log^2 t)} = 2^{o(t)}$ terminal nodes. Summing over all $n$ roots in the degeneracy ordering, the global generation phase of the algorithm takes at most $n \cdot 3^{t/3+o(t)}$ time.

We now verify that the local output tests against the exclusion set $X$ preserve this claimed linear dependence on $n$. Validating each candidate requires checking its adjacency against all vertices in $X$. Since the size of any valid clique in a $K_t$-immersion-free graph is bounded by $t$, testing a candidate against a single vertex in $X$ takes $O(t)$ time. This polynomial overhead is safely absorbed into the $2^{o(t)}$ factor. The exclusion set $X$ consists of two parts: the initial backward neighborhood $N^-(v_i)$, and the vertices dynamically excluded from $P$.

Because the dynamically excluded vertices are drawn from $N^+(v_i)$, their number is bounded by $11t+6$. Checking this dynamic part against the outputs of $v_i$ takes at most $(11t+6) \cdot 3^{t/3+o(t)}$ time. Since any polynomial factor is safely absorbed into the $3^{o(t)}$ asymptotic factor, this bound simplifies to $3^{t/3+o(t)}$. Across all $n$ roots, this specific validation cost contributes at most $n \cdot 3^{t/3+o(t)}$.

For the static part, checking all candidate outputs against $N^-(v_i)$ over all roots costs explicitly
\begin{align*}
     \sum_{i=1}^n |N^-(v_i)| 3^{t/3+o(t)} = |E(G)| 3^{t/3+o(t)}. 
\end{align*}
As $G$ is $(11t+6)$-degenerate, we have $|E(G)| \le (11t+6)n = n 2^{o(t)}$, meaning this static checking cost is also bounded by $n \cdot 3^{t/3+o(t)}$.

The total running time of the algorithm is the sum of the time spent generating the candidate outputs and the time spent validating them. Since the generation phase across all $n$ roots takes $n \cdot 3^{t/3+o(t)}$ time, and both validation costs (against the dynamic and static parts of $X$) are asymptotically bounded by the same term, the overall running time is bounded by
\begin{align*}
     n \cdot 3^{t/3+o(t)}. 
\end{align*}

\textbf{Asymptotic Optimality in the exponential term.} Let $G$ be the disjoint union of
$\left\lfloor n/(3\left\lfloor (t-1)/3 \right\rfloor) \right\rfloor$ copies of the complete multipartite graph $K_{3,3,\dots,3}$ with $\left\lfloor (t-1)/3 \right\rfloor$ parts,
together with isolated vertices if necessary. Each component has at most $t-1$ vertices and
is therefore $K_t$-immersion-free. It contributes exactly $3^{\left\lfloor (t-1)/3 \right\rfloor}$ maximal cliques. Thus the
total output size is
\begin{align*}
n \cdot 3^{t/3 - o(t)}.
\end{align*}
The upper bound is therefore optimal in the exponential dependence on $t$, up to the $o(t)$
term in the exponent.
\section{Concluding Remarks}
In this paper, we resolve the exact parameter dependence for the maximal clique enumeration problem within sparse graphs characterized by excluded minors and immersions. By leveraging a dynamic peeling strategy and the connections between local density and global structural exclusion, we significantly tighten existing complexity bounds. We demonstrate that enumerating all maximal cliques can be achieved in $n \cdot 4^{2t/5+o(t)}$ time for $K_t$-minor-free graphs and $n \cdot 3^{t/3+o(t)}$ time for $K_t$-immersion-free graphs, and these bounds are asymptotically optimal up to sub-exponential factors.

Beyond algorithmic improvements, our results highlight a fundamental theoretical insight: the underlying mechanism of a structural restriction determines the exact computational boundary of the enumeration task. Specifically, forbidding a $K_t$-minor bounds the graph's density via vertex contractions, directly yielding the optimal exponential base of $4^{2/5}$. In contrast, forbidding a $K_t$-immersion restricts the graph via edge-disjoint routing, leading to a different density limit that shifts the optimal base to $3^{1/3}$.

While our bounds firmly settle the optimal exponential base for parameter $t$ in these sparse frameworks, our work opens several compelling avenues for future research. A natural progression is to investigate whether similar tight algorithmic bounds can be derived for graphs excluding other structures, such as topological minors or specific subdivisions. Furthermore, while our framework is primarily theoretical, it is a highly promising direction to adapt our dynamic peeling techniques to practical community detection algorithms on massive real-world networks.
\section{Acknowledgments}
Research supported by National Key R\&D Program of China (Grant No. 2023YFA1010202) and by the Central Guidance on Local Science and Technology Development Fund of Fujian Province (Grant No.~2023L3003). 
\section*{Competing Interest}
The authors declare that they have no competing interests or financial conflicts to disclose.

\bibliographystyle{abbrv}
\bibliography{maximalClique}
\end{document}